\def\maxwidth{ %
  \ifdim\Gin@nat@width>\linewidth
    \linewidth
  \else
    \Gin@nat@width
  \fi
}
\definecolor{fgcolor}{rgb}{0.345, 0.345, 0.345}
\definecolor{shadecolor}{rgb}{.97, .97, .97}
\definecolor{messagecolor}{rgb}{0, 0, 0}
\definecolor{warningcolor}{rgb}{1, 0, 1}
\definecolor{errorcolor}{rgb}{1, 0, 0}
\newenvironment{knitrout}{}{} 
\theoremstyle{definition}
\newtheorem{defn}{\protect\definitionname}
\theoremstyle{plain}
\newtheorem{lyxalgorithm}{\protect\algorithmname}
\theoremstyle{plain}
\newtheorem{assumption}{\protect\assumptionname}
\theoremstyle{plain}
\newtheorem{thm}{\protect\theoremname}
\theoremstyle{plain}
\newtheorem*{thm*}{\protect\theoremname}
\newcommand{\oldCB}{standard CB}
\newcommand{\OldCB}{Standard CB}
\providecommand{\algorithmname}{Algorithm}
\providecommand{\assumptionname}{Assumption}
\providecommand{\definitionname}{Definition}
\providecommand{\theoremname}{Theorem}
\begin{document}
\title{Flatness-Robust Critical Bandwidth\thanks{I thank Chunrong Ai, Zheng Fang, and Douglas Turner for excellent
comments. I report there is no competing interest to declare.}}
\author{Scott Kostyshak\thanks{Assistant Professor, University of Florida, Dept.\ of Economics.
Email:~skostyshak@ufl.edu.}~\thanks{Visiting Professor, Universitat Pompeu Fabra, Dept.\ of Economics
and Business. Email:~scott.kostyshak@upf.edu.}}
\maketitle
\begin{abstract}
\noindent\setstretch{1.35}Critical bandwidth (CB) is used to test
the multimodality of densities and regression functions, as well as
for clustering methods. CB tests are known to be inconsistent if the
function of interest is constant (``flat'') over even a small interval,
and to suffer from low power and incorrect size in finite samples
if the function has a relatively small derivative over an interval.
This paper proposes a solution, flatness-robust CB (FRCB), that exploits
the novel observation that the inconsistency manifests only from regions
consistent with the null hypothesis, and thus identifying and excluding
them does not alter the null or alternative sets. I provide sufficient
conditions for consistency of FRCB, and simulations of a test of regression
monotonicity demonstrate the finite-sample properties of FRCB compared
with CB for various regression functions. Surprisingly, FRCB performs
better than CB in some cases where there are no flat regions, which
can be explained by FRCB essentially giving more importance to parts
of the function where there are larger violations of the null hypothesis.
I illustrate the usefulness of FRCB with an empirical analysis of
the monotonicity of the conditional mean function of radiocarbon age
with respect to calendar age.\\
\\
Keywords: Bootstrap, Critical bandwidth, Multimodality testing, Non-parametric
regression, Regression monotonicity
\end{abstract}

\newpage{}

\global\long\def\hstatCB{h^{CB}}%

\global\long\def\hstatFRCB{h^{FRCB}}%

\global\long\def\convergesInDistribution{\overset{d}{\rightarrow}}%

\global\long\def\convergesInProbability{\overset{p}{\rightarrow}}%

\global\long\def\P{\mathrm{P}}%

\global\long\def\E{\mathrm{E}}%

\section{Introduction}

Critical bandwidth (CB), introduced by \citet{silverman_unimodal},
is used to test the multimodality\footnote{When discussing the number of modes (equivalently, ``peaks''), I
am referring to the weak concept. For example, when discussing monotonicity
I am referring to weak monotonicity; and I refer to a function $f$
as unimodal if it is weakly unimodal, i.e., there exists a value $m$
for which $f$ is monotonically increasing for $x\le m$ and monotonically
decreasing for $x\ge m$.} of densities and regression functions, to detect mixture distributions,
and as a component of clustering methods. CB is discussed in monographs
on the bootstrap as an innovative way to enforce the null hypothesis,\footnote{See, e.g., Section 16.5, ``Testing multimodality of a population,''
of \citet{efron_bootstrap}; \citet[p. 153]{hallbootstrap}; and \citet[p. 189]{davison1997bootstrap}.} and has applications across many fields. Examples of the diverse
applications of CB include identifying the number of growth spurts
in height \citep{harezlak_heckman}, exploring the multimodality of
labor productivity \citep{henderson2008_parmeter}, identifying determinants
of the U-shape of life satisfaction over the life cycle \citep{kostyshak_2017},
and testing the ecological niche separation of species with similar
dietary requirements \citep{cumming2017_niche_separation}.\footnote{Although CB can be applied to multi-dimensional functions of interest,
in this paper I focus on univariate functions because they are the
most common applications of CB.}

Despite the variety of null hypotheses that CB tests are used for,
the robustness of results based on CB tests has been limited because
they are generally inconsistent if the true function of interest is
constant (``flat'') over an interval. With finite sample sizes,
this flatness problem can arise in additional situations, even when
the true regression function or density does not contain a perfectly
flat region: If the derivative is close to zero in absolute value,
the CB test can suffer from low power and incorrect size, as shown
in the simulations in Section~\ref{sec:Simulations}. Although flatness
exclusion is a reasonable assumption in some situations, many real-world
relationships have regions over which the function of interest is
constant or has a small derivative. For example, the height of humans
is increasing in the early ages, flattens out, and then eventually
decreases at older ages.

Flatness exclusion was assumed in the original proof of CB consistency
\citep{silverman1983},\footnote{\citet{silverman1983} assumes bounded support of the density, and
that the derivative has ``no multiple zeros.''} and has been noted by \citet{mammen1992}, \citet{cheng1998},
and \citet{hallHeckman_mono_dip}.\footnote{The widespread recognition of the inconsistency is reflected by the
multiple terms used to refer to it: the ``spurious mode problem,''
the ``flatness problem,'' and the ``boundary problem.''} Further, new techniques that propose improvements to \oldCB{} also
assume flatness exclusion (e.g., \citet{ameijeiras2016_multimode}).
With no solution available, the flatness problem has motivated the
development of non-CB methods that do not suffer from the same inconsistency
(e.g., \citet{cheng1998}, \citet{hallHeckman_mono_dip}, and \citet{gijbels2000}).
Such methods are solutions for specific situations in which CB has
been applied (e.g., regression monotonicity), but no solution has
been proposed that applies to all situations for which the flexible
CB framework can be used.

This paper proposes a flatness-robust critical bandwidth (FRCB) test
that is valid without the assumption of flatness exclusion. Since
flat regions do not contradict the null hypothesis about the number
of peaks or valleys of the function of interest, a consistent test
can be constructed by identifying and excluding such regions. Formally,
consider a parameter space of functions $\Theta$. Let $\hat{f}_{f}$
be a semi-parametric estimator of $f\in\Theta$, where the subscript
emphasizes that the distribution of the estimator depends on the true
parameter $f$. Suppose that $\hat{f}_{f}$ is consistent for all
$f\in\Theta$. That is, suppose that for the domain of interest $\mathcal{X}\subset\mathbb{R}$,
$\sup_{x\in\mathcal{X}}\left|\hat{f}_{f}(x)-f(x)\right|\overset{p}{\rightarrow}0$.
Let the pair $(\Theta^{N},\Theta^{A})$ partition $\Theta$ into the
sets corresponding to the null and alternative hypotheses. Consider
the subspace $\Theta^{F}\subset\Theta$ of functions with flat regions
that has non-null intersections with both $\Theta^{N}$ and $\Theta^{A}$.\footnote{For examples of elements in the null and alternative sets with flat
or near-flat regions, see Figure~\vref{reg_fns_graph}.} \OldCB{} techniques exclude $\Theta^{F}$ from the parameter space
by assumption. That is, even if for all $f\in\Theta^{F}$, $\hat{f}_{f}$
converges in probability to $f$, \oldCB{} tests are still not consistent
for this parameter subspace. In this paper, I allow $\Theta^{F}$
to be part of the parameter space and provide a data-driven transformation
$\mathbf{T}$ such that for all $f\in\Theta^{N}$, $\mathbf{T}\hat{f}_{f}\overset{p}{\rightarrow}\tilde{f}_{f}$
for some $\tilde{f}_{f}\in\Theta^{N}\setminus\Theta^{F}$; and (for
the same $\mathbf{T}$) for all $f\in\Theta^{A}$, $\mathbf{T}\hat{f}_{f}\overset{p}{\rightarrow}\tilde{f}_{f}$
for some $\tilde{f}_{f}\in\Theta^{A}\setminus\Theta^{F}$. In other
words, for any element in $\Theta$, after the transformation \emph{standard}
CB tests are asymptotically valid, since the probability limit of
$\mathbf{T}\hat{f}_{f}$ is not in $\Theta^{F}$. the transformation
mechanism thus exploits that it is unnecessary, and in fact undesired
(because of the flatness problem), that $\mathbf{T}\hat{f}_{f}\overset{p}{\rightarrow}f_{f}$
for the class of multimodal hypothesis tests. 

The transformation discussed above, $\mathbf{T}$, relies on a uniform
confidence band for $f'$ over $\mathcal{X}$ that essentially filters
out potentially flat regions as a first step before applying \oldCB{}.
Asymptotically, as the confidence band shrinks, non-flat regions are
included while flat regions are not. Recent work in statistics has
developed results for confidence bands for $f'$ in different settings.
For non-parametric regression, \citet{belloni2015_uniform} provides
results for uniform confidence bands of linear functionals (which
include derivatives) of the regression function. Similarly, \citet{chen2018optimal_uniform}
provides results for non-parametric instrumental variables regression.
Any uniform confidence band can be improved by applying the post-estimation
procedures in \citet{chen2021shape}. For example, a researcher may
use FRCB to test monotonicity of a regression function, and could
gain efficiency by imposing convexity. FRCB thus builds on the long-standing
CB literature, and the more recent literature on uniform confidence
bands of derivatives, to provide a test that removes a previously-needed
assumption.

The rest of the paper is organized as follows. Section~\ref{sec:Critical-Bandwidth}
reviews the \oldCB{} test and the flatness problem, which causes
inconsistency, low power, and incorrect size. Section~\ref{sec:FRCB}
introduces the FRCB test, which forms the core of the paper, and provides
sufficient conditions for consistency of the test. Section~\ref{sec:Simulations}
compares the performance of FRCB to CB in simulations of a test of
regression monotonicity. Section~\ref{sec:Application} illustrates
the usefulness of FRCB with an empirical analysis of the monotonicity
of the conditional mean function of radiocarbon age with respect to
calendar age. Section~\ref{sec:Conclusion} concludes. Proofs of
the theorems are in the Appendix.

\section{Critical Bandwidth\label{sec:Critical-Bandwidth}}

In this section, I define the class of CB test statistics and give
specific examples that fit in the framework. Let $D$ be a random
matrix of data with support $\mathscr{D}$, $\mathcal{F}$ the parameter
space, $f\in\mathcal{F}$ the function of interest, $G\subseteq\mathbb{R}$
the grid over which $f$ is estimated,\footnote{Every CB implementation uses a grid to check that a non-parametric
estimate is of a certain shape. The grid can be as dense as desired.
In the simulations in Section~\ref{sec:Simulations}, a grid of 100
and a grid of 500 yield the same results for the hypothesis test.} and $\mathsf{H}$ the space of smoothing parameters (e.g., the space
of valid bandwidths in kernel regression). Let $\hat{f}:\mathscr{D}\times\mathsf{H}\times\mathbb{R}\rightarrow\mathcal{F}$
be the semi-parametric estimator that maps data and smoothing parameters
to the parameter space, evaluated on a grid. This setup covers many
situations in applied statistics. $D$ often contains the explanatory
variable of interest, covariates, and the outcome variable. For example,
even though we assume for simplicity that $f$ is univariate, often
covariates are controlled for additively (e.g., in partial linear
models or generalized additive models).

The class of null hypotheses for which CB applies can be characterized
by a property of the underlying estimator for functions in a null
set. Let $\mathcal{H}_{0}\subseteq\mathcal{F}$, $D\subseteq\mathscr{D}$,
$\{h,h'\}\subseteq\mathsf{H}$, and suppose the following property
holds:\footnote{For cases in which this property is not satisfied exactly, simulations
suggest that contradictions of the property rarely occur. For example,
\citet{bowman_monotonicity} find that ``out of the total of 90,000
simulations only two cases were discovered where the estimated regression
curve was monotonic at one bandwidth and nonmonotonic at a higher
one. This behavior is therefore extremely rare and its effect on the
test procedure will be negligible.''}
\[
\hat{f}(D,h,G)\in\mathcal{H}_{0},\,\,h'>h\implies\hat{f}(D,h',G)\in\mathcal{H}_{0}.
\]
For example, if $\mathcal{H}_{0}$ is the class of monotone functions,
satisfying this property requires that if a smoothing parameter value
yields an estimate that is monotone, increasing the smoothing parameter
from that value also yields an estimate that is monotone. The CB test
statistic is then defined as
\[
\hstatCB(D,G)\coloneqq\min\left\{ h\in\mathsf{H}\mid\hat{f}(D,h,G)\in\mathcal{H}_{0}\right\} .
\]
The null hypothesis is rejected if $\hstatCB(D,G)$ is too large,
which occurs when only a large smoothing parameter can force the estimator
into being consistent with the null hypothesis. Critical values are
determined from a bootstrap.\footnote{To create the bootstrapped data sets, a sample (with replacement)
is taken from the residuals and added to $\hat{f}(D,\hstatCB,G)$.
For more details, see \citet{bowman_monotonicity} and \citet{kostyshak_2017}.}

\paragraph{Examples of CB tests}

(1) The seminal CB test of \citet{silverman_unimodal} corresponds
to $\mathcal{F}$ as the class of densities, $\mathcal{H}_{0}$ as
the class of densities with less than a specified number of modes,
$\hat{f}$ as a kernel density estimator, and $\mathsf{H}$ as the
set of kernel bandwidths. (2) The test of \citet{bowman_monotonicity}
corresponds to $\mathcal{F}$ as the class of regression functions,
$\mathcal{H}_{0}$ as the collection of monotone regression functions,
and $\hat{f}$ as a non-parametric regression estimator. (3) \citet{harezlak_heckman}
extends $\mathcal{H}_{0}$ to regression functions of an arbitrary
number of modes, and $\hat{f}$ to any estimator with a smoothing
parameter. (4) \citet{kostyshak_2017} extends $\mathcal{H}_{0}$
to quasi-convex and quasi-concave regression functions, and $\hat{f}$
to estimators of generalized additive models.

\subsection{The flatness problem}

For \oldCB{} tests to have good statistical properties, flatness
exclusion is required, which is formally defined as follows:
\begin{defn}[\textbf{flatness exclusion}]
A function $f:\mathcal{X}\mapsto\mathbb{R}$ exhibits \emph{flatness
exclusion} (with respect to $\mathcal{X}\subset\mathbb{R}$) if there
is no interval $[a,b]\subset\mathcal{X}$ such that for $x\in[a,b]$,
${f'(x)=0}$.

\end{defn}
To gain intuition for why the presence of a flat region invalidates
the properties of \oldCB{} tests, consider a simple example in the
context of regression. Suppose that 
\[
y=f(x)+\epsilon,
\]
where $\E(\epsilon|x)=0$ and $f$ is univariate. The \oldCB{} test
is inherently flawed if there exist ${f_{1}\in\mathcal{H}_{0}},{f_{2}\in\mathcal{H}_{0}^{C}}$
that induce indistinguishable test statistics, even in arbitrarily
large samples. This flaw can occur even if $\left\Vert f_{1}-f_{2}\right\Vert _{\infty}$
is large. For an example of how this situation arises, suppose that
$\mathcal{H}_{0}$ is the class of monotone functions and that $f_{1}$
is monotone and contains an interval, say $[a,b]$, over which it
is constant. Suppose that $f_{2}$ is the same as $f_{1}$, except
that it has a dip (violation of monotonicity) in some interval $[c,d]$.
Since the derivative of $f_{1}$ is zero over $[a,b]$, for small
$h$, $\hat{f}_{1}\in\mathcal{H}_{0}^{C}$ with high probability
because even a moderate amount of variation in $\hat{f}_{1}$ leads
to estimates of the derivative on both sides of zero. As $h$ increases,
the variation in $\hat{f}_{1}$ decreases, and eventually $\hat{f}_{1}\in\mathcal{H}_{0}$
with probability increasing toward 1.\footnote{For most semi-parametric estimators, as $h$ tends to infinity, $\hat{f}_{1}$
becomes linear.} Define $h_{stat}^{1}$ to be the CB test statistic associated with
$\hat{f}_{1}$. The identification issue described above occurs if
$\hat{f}_{2}(D,h_{stat}^{1},G)$ is monotone over $[c,d]$: In this
case, the violation of monotonicity of $f_{2}$ over $[c,d]$ was
smoothed away as a result of the flat interval over $[a,b]$, and
$\hat{f}_{2}$ has a similar distribution as $\hat{f}_{1}$.

A similar manifestation of the flatness problem can distort the size
of the CB test and is not driven inherently by the value of the test
statistic, as above, but rather by the bias of the bootstrap. The
validity of the CB bootstrap depends on $\hat{f}(D,\hstatCB,G)$ being
close (in probability) to $f$ under the null, because the bootstrapped
data sets are constructed based on $\hat{f}(D,\hstatCB,G)$. However,
in the presence of flat regions, $\hat{f}(D,\hstatCB,G)$ is  smoother
than $f$. The function $\hat{f}(D,\hstatCB,G)$ has an almost surely
non-zero derivative over the flat interval, and thus the bootstrapped
test statistics are smaller, on average, than the original test statistic,
yielding a low \emph{p}\nobreakdash-value. Even in regions outside
the flat region, the over-smoothed $\hat{f}(D,\hstatCB,G)$ could
be substantially different from $f$, and thus the bootstrapped test
statistic can be further biased. Nominal size is thus smaller than
the actual size.

If $f'(x)$ does not exactly equal zero but is small in absolute value
over an interval, although the CB test is consistent, it can have
poor finite-sample properties. Simulations in Section~\ref{sec:Simulations}
include such a function under the alternative (see $m_{4}$ in Figure~\ref{reg_fns_graph}),
as well as functions that suffer from the flatness problem under the
null (see $m_{1}$ and $flat_{1}$). The next section proposes a
solution to the flatness problem.

\section{Flatness-robust Critical Bandwidth\label{sec:FRCB}}

In this section I introduce flatness-robust critical bandwidth (FRCB),
which transforms \oldCB{} into a test that is robust to the presence
of flat regions in the function of interest.
\begin{defn}[FRCB]
 The FRCB test statistic is defined as
\[
\hstatFRCB\coloneqq h_{stat}(D,\hat{G}^{NF}),
\]
where $h_{stat}$ is the \oldCB{} test statistic, $D$ is the data,
and $\hat{G}^{NF}$ is the grid as constructed in step~\ref{enu:construct_Gnf}
of the following algorithm.
\end{defn}
\begin{lyxalgorithm}
Perform FRCB test at level $\alpha$\label{alg:frcb}~
\begin{enumerate}
\item \label{enu:calc_cbands}Calculate $L_{f'}(x)$ and $U_{f'}(x)$.\footnote{These are constructed at level $\alpha_{n}^{flat}$. For details on
the selection of $\alpha_{n}^{flat}$, see Appendix~\ref{sec:Proofs}.}
\item \label{enu:construct_Gnf}Construct $\hat{G}^{NF}\coloneqq\left\{ x\in G\mid0\notin\left[L_{f'}(x),U_{f'}(x)\right]\right\} $.
\item Perform a \oldCB{} test on $\hat{G}^{NF}$ at level $\alpha$.
\end{enumerate}
\end{lyxalgorithm}
\noindent We now specify sufficient conditions for consistency of
FRCB.
\begin{assumption}
\label{assu:first}\label{assu:CB-consistent-no-flatness}\OldCB{}
is consistent under flatness exclusion.
\end{assumption}
\noindent Theorems in this section demonstrate that FRCB extends consistency
of \oldCB{} to hold without flatness exclusion. Sufficient conditions
for a consistent CB test under flatness exclusion depend on the function
of interest (e.g., density or regression function), as well as the
specific estimation method employed. \noindent For proofs of consistency of CB under flatness exclusion
in specific settings, see \citet{silverman1983} for the case of densities
and \citet{kostyshak_2017} for the case of regression functions.
\begin{assumption}
$f$ is continuously differentiable.
\end{assumption}
\begin{assumption}
\label{assu:supnorm}\label{assu:last}For any $a\in\left(0,1\right)$,
there exists a simultaneous confidence band for $f'$, denoted $\left(U_{f'},L_{f'}\right)$
, such that $\P\left[L_{f'}\le f'\le U_{f'}\right]=a+o(1)$ and $\left\Vert U_{f'}-L_{f'}\right\Vert _{\infty}=o_{p}(1)$.
\end{assumption}
\noindent The following two theorems are the main result of this paper
and show consistency of FRCB without requiring the flatness exclusion
assumption.
\begin{thm}
\label{thm:undernull}Under assumptions \eqref{assu:first}–\eqref{assu:last},
FRCB has asymptotic level $\alpha$.
\end{thm}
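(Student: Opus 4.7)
The plan is to show that, with probability tending to one, the restriction of the procedure to $\hat{G}^{NF}$ converts the problem into one in which the standard CB test's hypotheses hold, so Assumption~\ref{assu:first} delivers the conclusion. Concretely, I would first choose the confidence-band level so that $\alpha_n^{flat}\to 0$ (i.e., take $a\to 1$ in Assumption~\ref{assu:supnorm}), and define the coverage event
\[
A_n \coloneqq \left\{ L_{f'}(x)\le f'(x)\le U_{f'}(x)\text{ for all }x\in\mathcal{X}\right\}.
\]
By Assumption~\ref{assu:supnorm}, $\P(A_n)\to 1$, so any statement holding on $A_n$ contributes at most $o(1)$ error to the rejection probability, and I may restrict attention to $A_n$ throughout.

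The next step is to characterize $\hat{G}^{NF}$ on $A_n$. On this event, for every $x\in\hat{G}^{NF}$ the interval $[L_{f'}(x),U_{f'}(x)]$ contains $f'(x)$ and excludes $0$, so either $f'(x)\ge L_{f'}(x)>0$ or $f'(x)\le U_{f'}(x)<0$; in particular $f'(x)\ne 0$. Using the uniform width condition $\|U_{f'}-L_{f'}\|_\infty=o_p(1)$ together with continuity of $f'$, I would further argue that, for every fixed $\delta>0$, with probability tending to one $\hat{G}^{NF}$ contains every grid point at which $|f'(x)|\ge\delta$, while excluding every grid point where $f'(x)=0$. Hence, asymptotically, $\hat{G}^{NF}$ is sandwiched between the full set of non-flat grid points and any arbitrarily small $|f'|$-neighborhood of the flat grid points.

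Given this grid characterization, the FRCB test statistic is, by definition, the standard CB statistic applied to the random grid $\hat{G}^{NF}$. Because on $A_n$ the function $f$ has no flat points within $\hat{G}^{NF}$, flatness exclusion is effectively enforced on the grid that is actually used. Conditioning on a realization of $\hat{G}^{NF}$ within $A_n$, Assumption~\ref{assu:first} therefore yields that the conditional rejection probability is $\alpha+o(1)$. Marginalizing over $\hat{G}^{NF}$ and adding $\P(A_n^c)=o(1)$ gives
\[
\P\bigl(\text{reject }H_0\bigr)=\P\bigl(\text{reject},A_n\bigr)+\P\bigl(\text{reject},A_n^c\bigr)\le \alpha+o(1),
\]
which is the asymptotic level statement.

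The main obstacle I anticipate is making rigorous the passage from the \emph{interval}-based notion of flatness exclusion in the definition to the \emph{grid}-based quantities that actually drive CB, and handling the randomness of $\hat{G}^{NF}$ when invoking Assumption~\ref{assu:first}. This likely requires a uniformity argument that controls the CB level simultaneously over a class of grids containing all sufficiently non-flat points of $f$, so that Assumption~\ref{assu:first}, which is stated for a deterministic grid, transfers to the data-driven grid used by FRCB. The remaining steps—choice of $\alpha_n^{flat}$, exploitation of the uniform band width, and the conclusion via a union bound with $A_n^c$—should be comparatively routine.
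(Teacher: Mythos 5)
Your proposal is correct and follows essentially the same route as the paper: both condition on a high-probability event under which $\hat{G}^{NF}$ contains no flat grid points (you use the full coverage event of the band, the paper uses $\{\hat{G}^{NF}\cap G^{F}=\emptyset\}$ directly), invoke Assumption~\ref{assu:first} on that event, and absorb the complement into an $\alpha_{n}^{flat}+o(1)$ remainder. Your closing caveat about transferring the level guarantee from a deterministic grid to the random grid $\hat{G}^{NF}$ is a real subtlety that the paper's own proof also passes over silently, so you are, if anything, slightly more careful than the published argument.
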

\begin{proof}
Proved in Appendix~\ref{sec:proof_undernull}.
\end{proof}
\begin{thm}
\label{thm:underalt}Under assumptions \eqref{assu:first}–\eqref{assu:last},
FRCB has asymptotic power 1.
\end{thm}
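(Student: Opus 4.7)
The plan is to show that with probability tending to one, the random grid $\hat{G}^{NF}$ is ``sandwiched'' between two deterministic subgrids of $G$ on which (i) flatness exclusion holds and (ii) the violation of the null hypothesis that witnesses $f\in\Theta^{A}$ is retained. Once this sandwich is established, Assumption~\ref{assu:CB-consistent-no-flatness} can be invoked on $\hat{G}^{NF}$ to deliver rejection with probability tending to one.

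First I would choose the nominal level $\alpha_{n}^{flat}$ for the confidence band in step~\ref{enu:calc_cbands} so that the coverage $a$ from Assumption~\ref{assu:supnorm} tends to one, and work on the good event $E_{n}=\{L_{f'}\le f'\le U_{f'}\}$, which satisfies $\P(E_{n})\to 1$. On $E_{n}$, whenever $x\in G$ satisfies $f'(x)=0$ we have $0\in[L_{f'}(x),U_{f'}(x)]$, so $x\notin\hat{G}^{NF}$; conversely, for any $\delta>0$ and any $x\in G$ with $|f'(x)|\ge\delta$, the bound $\|U_{f'}-L_{f'}\|_{\infty}=o_{p}(1)$ forces the band's width below $\delta$ for large $n$, so zero is excluded from $[L_{f'}(x),U_{f'}(x)]$ and hence $x\in\hat{G}^{NF}$. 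Define $G_{\delta}=\{x\in G:|f'(x)|\ge\delta\}$ and $G_{0}=\{x\in G:f'(x)\ne 0\}$; then with probability tending to one $G_{\delta}\subseteq\hat{G}^{NF}\subseteq G_{0}$.

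Second, I would translate ``$f\in\Theta^{A}$'' into a concrete violation of the null that persists on $G_{\delta}$ for some $\delta>0$. Since $f$ is continuously differentiable (Assumption~2), any feature that places $f$ in $\Theta^{A}$ (an extra mode, a dip in monotonicity, a failure of quasi-convexity, etc.) must be witnessed on an open set where $|f'|$ is bounded below. Hence there exists $\delta_{0}>0$ such that $f$ restricted to $G_{\delta_{0}}$ already lies in $\Theta^{A}$, and moreover satisfies flatness exclusion on that subgrid. By the sandwich, the same holds for $f$ restricted to $\hat{G}^{NF}$ on the event $E_{n}$, for all sufficiently large $n$.

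Finally, I would apply Assumption~\ref{assu:CB-consistent-no-flatness} to the standard CB test carried out on the restricted grid in step~3 of Algorithm~\ref{alg:frcb}: since $f$ viewed on $\hat{G}^{NF}$ belongs to $\Theta^{A}\setminus\Theta^{F}$, consistency of CB under flatness exclusion gives rejection probability tending to one on $E_{n}$, and the unconditional power follows from $\P(E_{n})\to 1$. The main obstacle is the last step, namely justifying the use of Assumption~\ref{assu:CB-consistent-no-flatness}, which is stated for a \emph{deterministic} grid, when the actual grid $\hat{G}^{NF}$ is random and data-dependent. I would handle this by conditioning on $E_{n}$ together with the event $\{G_{\delta_{0}}\subseteq\hat{G}^{NF}\subseteq G_{0}\}$: on that event the distribution of $\hstatCB(D,\hat{G}^{NF})$ is dominated, in the relevant rejection sense, by a deterministic-grid CB statistic to which Assumption~\ref{assu:CB-consistent-no-flatness} applies, and the random bootstrap critical value inherits the same comparison because the bootstrap data-generating process is constructed from a smoothed $\hat{f}$ that, under the alternative, does not mimic the null on the retained non-flat grid.
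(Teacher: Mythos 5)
Your proposal is correct and follows essentially the same route as the paper's proof: the paper defines the event $A=\{\hat{G}^{NF}=G^{NF}\}$, bounds $\P[A^{C}]\le(\alpha_{n}^{flat}+o(1))+o(1)\rightarrow0$ using coverage and the shrinking band width, and invokes Assumption~\ref{assu:CB-consistent-no-flatness} conditional on $A$ — your $\delta$-sandwich $G_{\delta}\subseteq\hat{G}^{NF}\subseteq G_{0}$ collapses to this exact equality because $G$ is a finite grid, so $\min_{g\in G^{NF}}|f'(g)|>0$. The extra care you take in justifying that the alternative is preserved on the restricted grid and that the grid is random is left implicit in the paper's appeal to Assumption~\ref{assu:CB-consistent-no-flatness}.
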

\begin{proof}
Proved in Appendix~\ref{sec:proof_underalt}.
\end{proof}
\noindent The convergence rate and asymptotic distribution of CB test
statistics depend on the specific situation (e.g., underlying semi-parametric
estimator of the test), but the following theorem shows that when
$f$ does not have flat regions, the test statistics of FRCB and \oldCB{}
have the same asymptotic distribution.
\begin{thm}
\label{thm:same-dist}Under flatness exclusion, $\hstatFRCB$ has
the same asymptotic distribution as $\hstatCB$.
\end{thm}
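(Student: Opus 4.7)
The plan is to show that $\P(\hstatFRCB=\hstatCB)\to 1$ under flatness exclusion, which immediately implies that the two statistics share the same asymptotic distribution. I would first establish the deterministic inequality $\hstatFRCB\le\hstatCB$: since $\hat{G}^{NF}\subseteq G$, the shape constraint ``$\hat{f}(D,h,\cdot)\in\mathcal{H}_{0}$'' evaluated on $\hat{G}^{NF}$ is never stronger than the same constraint evaluated on $G$ (a monotone or $k$-modal sequence indexed by $G$ remains so when restricted to the sub-index set $\hat{G}^{NF}$), so every $h$ feasible for $\hstatCB$ is also feasible for $\hstatFRCB$; in particular $\hstatCB$ itself lies in the feasible set defining $\hstatFRCB$.

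The matching bound $\P(\hstatFRCB<\hstatCB)\to 0$ would follow by showing that $\hat{G}^{NF}=G$ with probability tending to $1$. By Assumption~\ref{assu:supnorm}, the simultaneous band $(L_{f'},U_{f'})$ covers $f'$ with probability $a+o(1)$ (taking $a$ close to $1$ via the choice of $\alpha_{n}^{flat}$ detailed in the appendix), and has uniformly vanishing width $\eta_{n}\coloneqq\|U_{f'}-L_{f'}\|_{\infty}=o_{p}(1)$. Under flatness exclusion, the continuous function $f'$ cannot vanish on any interval, so for a fixed finite grid $G$ disjoint from the (closed, nowhere-dense) zero set of $f'$, the quantity $\delta\coloneqq\min_{x\in G}|f'(x)|$ is strictly positive. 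On the joint event that the band covers $f'$ and $\eta_{n}<\delta$---of probability $1-o(1)$---for each $x\in G$ either $L_{f'}(x)>0$ or $U_{f'}(x)<0$, so $0\notin[L_{f'}(x),U_{f'}(x)]$ and $\hat{G}^{NF}=G$. Combined with the first step, this yields $\hstatFRCB=\hstatCB$ on the same high-probability event, hence the two statistics are equal in probability and therefore share any limiting distribution.

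The main obstacle I anticipate is the borderline case in which $f'(x)=0$ at one or finitely many isolated grid points---a scenario permitted by flatness exclusion, which only prohibits $f'$ from vanishing on an \emph{interval}. Such a grid point is excluded from $\hat{G}^{NF}$ with probability close to the coverage level of the band, so the clean identity $\hat{G}^{NF}=G$ no longer holds with probability $1-o(1)$. To close this gap, the proof must argue that dropping finitely many grid points coinciding with isolated zeros of $f'$ leaves the critical bandwidth unchanged with probability tending to $1$; this would rely on continuity of $\hat{f}$ in $h$ together with the fact that, under flatness exclusion, the binding shape-constraint violations that pin down $\hstatCB$ originate from non-isolated portions of the domain rather than from any single isolated grid point. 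I expect this last step to be the main technical hurdle of the formal argument.
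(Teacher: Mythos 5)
Your proposal is correct and takes essentially the same route as the paper: both arguments show $\P\bigl[\hat{G}^{NF}=G\bigr]\rightarrow1$ by combining the band's coverage with its vanishing sup-norm width and the fact that $\min_{g\in G}|f'(g)|>0$, and then conclude $\P\bigl[\hstatFRCB=\hstatCB\bigr]\rightarrow1$, which gives equality of asymptotic distributions. The isolated-zero edge case you flag as the ``main technical hurdle'' is a fair concern, but the paper does not resolve it either---its proof simply asserts that $a=\min_{g\in G}|f'(g)|>0$ under flatness exclusion, implicitly assuming no grid point coincides with a zero of $f'$---so your write-up is, if anything, more candid about that gap than the published argument.
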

\begin{proof}
Proved in Appendix~\ref{subsec:proof_same-dist}.
\end{proof}

\paragraph{Software implementation}

FRCB is implemented in the R\nocite{rcore} package \texttt{frcbstats}
\citep{kostyshak2020_frcbstats}, which makes it easy for practitioners
to apply FRCB to the testing of monotonicity, quasi-convexity, and
quasi-concavity of regression functions. The bootstraps used in \oldCB{}
and in step~\ref{enu:construct_Gnf} of Algorithm~\ref{alg:frcb}
are easily parallelized. Compared with other semi-parametric methods,
the algorithms are fast and the test is practical when working with
millions of observations.

\section{Simulations\label{sec:Simulations}}

This section discusses the results of simulations of a test of regression
monotonicity, and compares rejection rates of the \oldCB{} test with
those of the FRCB test for the regression functions shown in Figure~\ref{reg_fns_graph}.
The regressors are uniformly distributed and the additive error term
is independent and normally distributed with standard deviation 0.25.
The tests use local polynomial regression for the underlying estimator
of $f$ and thin plate regression splines for simultaneous confidence
bands of the derivative.\footnote{This combination of estimators was chosen because local polynomial
regression is well known and thin plate regression splines have nice
properties for estimating the derivative and are the default in R's
\texttt{mgcv} package \citep{wood_mgcv}. For more information, see
\citet{wood_book}. The smoothing parameter for the thin plate regression
is chosen by generalized cross-validation.} 1,000 bootstrap replications are used for the core CB
algorithm and for step \ref{enu:construct_Gnf} of Algorithm~\ref{alg:frcb},
with an $\alpha_{n}^{flat}$ sequence of $n^{-1/2}$. The same estimation
method  is used for the application in the next section.

\begin{figure}[p]

\begin{centering}
\begin{knitrout}
\color{fgcolor}
\includegraphics[width=10cm]{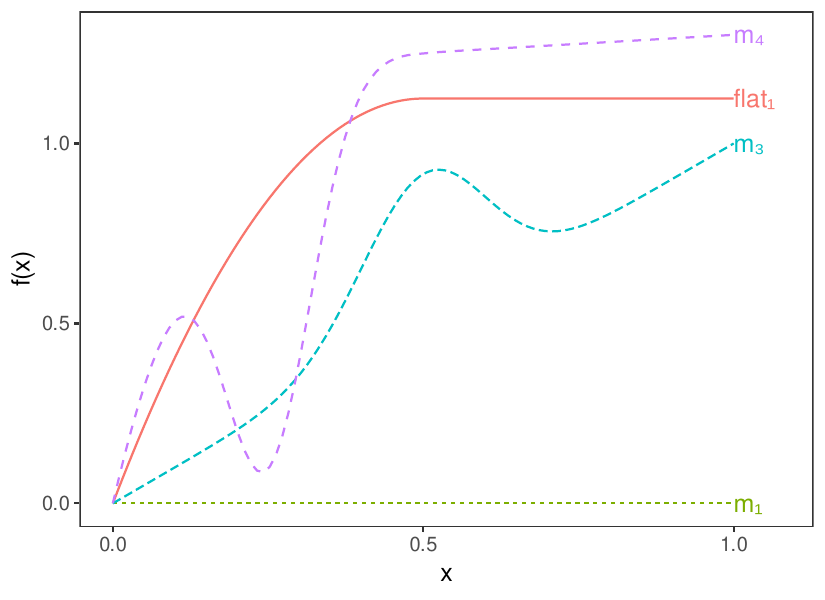} 
\end{knitrout}
\par\end{centering}
\caption{True regression functions used in simulations\label{reg_fns_graph}}

\noindent\begin{minipage}[t]{1\columnwidth}%
\smallskip{}

See Appendix~\ref{sec:sim_functions} for analytical definitions
and origins of functions in this figure.%
\end{minipage}
\end{figure}

\begin{table}[p]

\caption{Rejection Rates of CB and FRCB\label{semiParam_twocov}}

\begin{center}

\begin{tabular}{llllllll}   \hline \hline \multicolumn{1}{l}{} & \multicolumn{7}{c}{$n$} \\ $H_0$ &       50 &      100 &      250 &      500 &  1{,}000 &  2{,}000 & 10{,}000 \\    \hline $M_{CB}(X_{m_{1}})$ & 0.14 & 0.13 & 0.14 & 0.14 & 0.15 & 0.14 & 0.27 \\    $M_{FRCB}(X_{m_{1}})$ & 0.01 & 0.01 & 0.00 & 0.01 & 0.00 & 0.00 & 0.00 \\  \hdashline   $M_{CB}(X_{flat1})$ & 0.41 & 0.50 & 0.62 & 0.73 & 0.84 & 0.94 & 1.00 \\    $M_{FRCB}(X_{flat1})$ & 0.05 & 0.05 & 0.04 & 0.04 & 0.03 & 0.03 & 0.04 \\  \hline   $M_{CB}(X_{m_{3}})$ & 0.14 & 0.16 & 0.36 & 0.69 & 0.95 & 1.00 & 1.00 \\    $M_{FRCB}(X_{m_{3}})$ & 0.06 & 0.16 & 0.43 & 0.78 & 0.98 & 1.00 & 1.00 \\  \hdashline   $M_{CB}(X_{m_{4}})$ & 0.37 & 0.40 & 0.46 & 0.53 & 0.59 & 0.68 & 0.92 \\    $M_{FRCB}(X_{m_{4}})$ & 0.17 & 0.50 & 0.47 & 0.66 & 0.89 & 1.00 & 1.00 \\     \hline \hline \end{tabular}

\end{center}

This table shows the proportion of times each test rejects the null
hypothesis of regression monotonicity at the nominal 5\% level. In
the $H_{0}$ column, the notation ``$M_{T}(Z)$'' means a null hypothesis
of regression monotonicity in $Z$, tested using the test $T$ (CB
or FRCB). Rows above the solid horizontal line are cases in which
the null hypothesis is true. For the shapes of the regression functions,
see Figure~\ref{reg_fns_graph}. Proportions are based on 3,000
simulations.
\end{table}

Table~\ref{semiParam_twocov} shows rejection rates at the nominal
5\% level of CB and FRCB tests for the regression functions graphed
in Figure~\ref{reg_fns_graph}. Rows above the solid horizontal line
are cases in which the null hypothesis is true. In the case of a flat
line ($m_{1}$), the CB test rejects the null hypothesis of monotonicity
more than 10\% of the time (twice the nominal level) across all sample sizes.
A worse scenario for CB is $flat_{1}$, for which the test rejects
in almost half of the samples with a sample size of 100, and the
CB rejection proportion tends to 1 in probability as the sample size
increases. For both $m_{1}$ and $flat_{1}$, the FRCB test rejects
no more than 6\% of the time across all sample sizes.

CB tests are known to be conservative, and FRCB tests inherit this
property. Even with the correction to conservativeness of CB tests
proposed by \citet{hallYork2001_calibration}, which is used in the
simulations, the test is still conservative: Although the test was
run at the nominal 5\% level, the rejection rate of the FRCB test
is closer to 0\% than 5\% for $m_{1}$. More complex calibrations
of the level could be explored but are beyond the scope of this paper.

The rows in the table below the solid line are cases in which the
null hypothesis is false. Both CB and FRCB asymptotically reject,
but for the case of $m_{4}$, the power of the CB test converges more
slowly to 1 than the FRCB test because of the flatness problem that
is triggered by the small derivative of the second half of the function.
For $m_{3}$, in contrast to $m_{4}$, there is no large region with
a derivative close to zero. The rejection rates of CB and FRCB thus
become similar for large sample sizes for this regression function.
The intuition for this similarity is captured in Theorem~\ref{thm:same-dist}.

Two competing effects determine whether the power of FRCB is larger
than CB for small sample sizes. Lack of precision when identifying
flat regions can cause non-flat regions—which might contain evidence
against the null—to be excluded from the $\hat{G}^{NF}$ grid (see
step~\ref{enu:construct_Gnf} of Algorithm~\ref{alg:frcb}) that
is passed to the CB test. This effect explains why CB performs better
than FRCB in the case of $m_{3}$ for sample sizes 50 and 100: The
region just after 0.5 is not estimated precisely enough to determine
whether it is non-flat, and thus the grid on which $\hstatCB$ is
calculated might not contain a strong violation of monotonicity. On
the other hand, a derivative that is relatively small in absolute
value results in low power for the same reason that flatness results
in asymptotic inconsistency. This second effect explains why, for
the case of $m_{4}$, FRCB has higher power than CB for sample sizes
of 100 and larger. Which of the two effects dominates depends on
the shape of the regression function and the precision of the underlying
non-parametric estimator, which is influenced by factors such as the
sample size and the distribution of the error term.

The results of the simulations show how the flatness problem can arise
under both the null and the alternative, and when there are no flat
parts but there are parts with a small first derivative. It is not
surprising that FRCB has lower power than CB when the sample size
is 50: Typically, relying on fewer assumptions has a cost of lower
power. That FRCB has higher power for most sample sizes in the simulation
is a bonus feature, and highlights its usefulness in situations in
which the derivative is not exactly zero but is small enough to trigger
symptoms of the flatness problem.

\section{Application\label{sec:Application}}

\begin{figure}
\begin{knitrout}
\color{fgcolor}
\includegraphics[width=\maxwidth]{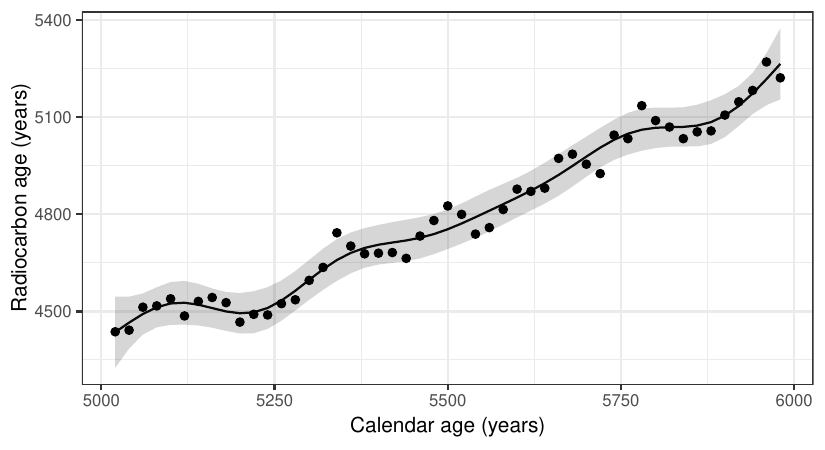} 
\end{knitrout}

\caption{Raw data with non-parametric point and interval estimates of the conditional
mean function\label{fig:unrestricted}}

\smallskip{}

{\footnotesize This figure shows a non-parametric fit and corresponding
simultaneous confidence band at the 95\% confidence level. The estimates
are from thin plate regression splines with the smoothing parameter
chosen by generalized cross-validation. The simultaneous confidence
band is constructed from an empirical bootstrap with 1,000
replications; the critical value is the bootstrapped 0.95-quantile
of the maximal t-statistic.}
\end{figure}

We examine a subset of radiocarbon data from Irish oak trees, published
by \citet{pearson1993high} and previously analyzed in the context
of regression monotonicity using CB by \citet{bowman_monotonicity}.
The dataset consists of the radiocarbon age and the calendar age.\footnote{A measure of precision of the radiocarbon dates is also included in
the dataset, but for simplicity the variable is ignored in this paper
since \citet{bowman_monotonicity} found that incorporating the measure
did not substantively change their results.} Calibration of this relationship is important, as the radiocarbon
age is often used to predict the calendar age when the calendar age
is unknown. The raw data are shown in Figure~\ref{fig:unrestricted},
along with a non-parametric fit and corresponding simultaneous confidence
band at the 95\% confidence level. The relationship is known to be
non-monotone,\footnote{In the radiocarbon dating literature, these fluctuations are sometimes
referred to as ``wiggles'' or ``de Vries effects.'' For more information,
see \citet[pp.\ 53--54]{taylor2014radiocarbon}.} and thus is an interesting case study for whether the non-monotonicity
can be picked up with a small sample size. A visual inspection of
the simultaneous confidence band does not suggest strong evidence
against monotonicity. However, visual inspection of the data points
themselves does hint at non-monotone fluctuations in the underlying
regression function.

A \oldCB{} test, using the same non-parametric estimation method
as the simulations, results in a \emph{p}-value of 0.002.
Fitted values corresponding to three selected smoothing parameters,
including the test statistic value, are shown in Figure~\ref{fig:paths-cb}.
Consider the fitted curve corresponding to a smoothing parameter of
37.8, which is
considerably smaller than the test statistic value of 56.3.
The 37.8 fit
is monotone, except for the section with calendar ages less than 5,250
years. This shows that the flatness problem could be binding, i.e.,
both monotone and non-monotone regression functions exist that would
yield the same distribution of the test statistic. Another indication
that flatness could be a problem for this application is that the
three fitted curves have different signs of the derivative over considerable
portions of the segment before 5,250. In summary, the
small CB \emph{p}-value could be driven by the 12
data points with values less than 5,250, and it is not
clear whether those data points do indeed carry enough information
to suggest such a strong rejection of monotonicity.

Given the concerns regarding the flatness problem described above,
it is interesting to consider using FRCB to examine whether there
exists strong evidence of non-monotonicity when regions with a small
derivative (e.g., the region before 5,250) are detected
and excluded.  The FRCB test statistic is 38.3,
which is smaller than the CB statistic (56.3).
The FRCB test statistic is usually smaller than the CB test statistic
because the set of FRCB inequality constraints is a subset of the
CB inequality constraints. In this particular application the difference
is large, which is not surprising given the flatness concerns. The
fitted value curves corresponding to smoothing parameters from the
FRCB test are shown in Figure~\ref{fig:paths-frcb}. The gaps between
the line segments are a result of the filtering of the CB grid from
step~\ref{enu:construct_Gnf} of Algorithm~\ref{alg:frcb}, and
are useful for understanding where the derivative can be interpreted
away from zero with precision. The FRCB \emph{p}-value is 0.013,
and thus the non-monotone fluctuations in the sample provide some
evidence of non-monotonicity in the population.

\begin{figure}
\subfloat[\OldCB{}\label{fig:paths-cb}]%
{
\begin{knitrout}
\color{fgcolor}
\includegraphics[width=.5\linewidth]{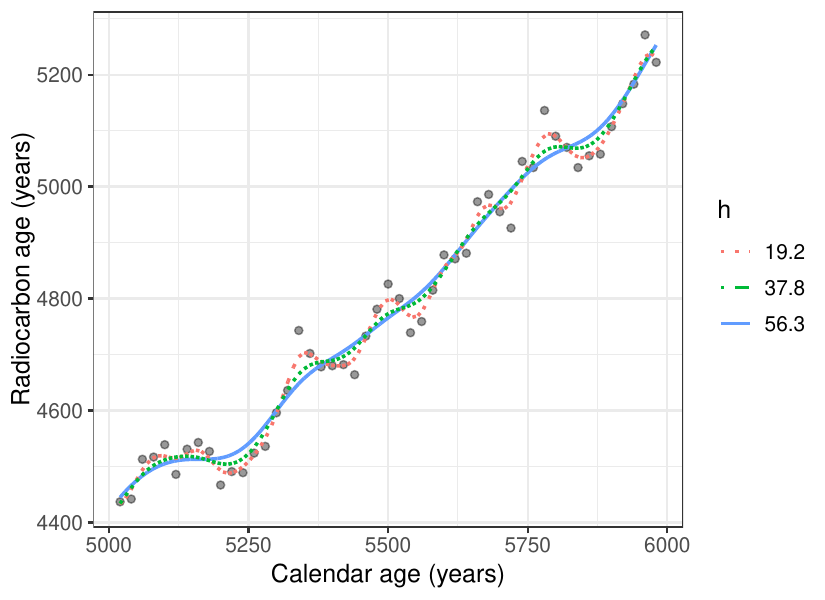} 
\end{knitrout}

}%
\subfloat[FRCB\label{fig:paths-frcb}]%
{
\begin{knitrout}
\color{fgcolor}
\includegraphics[width=.5\linewidth]{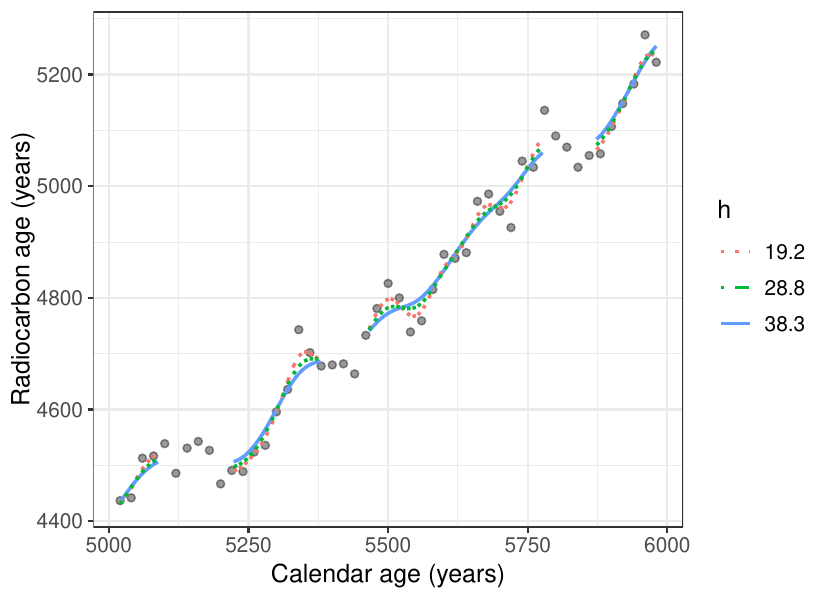} 
\end{knitrout}

}

\caption{Fits for various smoothing parameters, CB and FRCB}
\end{figure}

\section{Conclusion\label{sec:Conclusion}}

This paper provides an asymptotic solution to the fundamental problem
that in some situations, even with a large sample size, CB does not
provide correct inference.  Further, simulations show that FRCB performs
better than CB in some finite-sample situations, even when no regions
are exactly flat. The application to radiocarbon data provides insight
on the flatness problem in an applied situation, and shows how FRCB
can be used to provide inference that is robust to flat regions.

By removing an assumption that in practice might be violated, FRCB
offers a robust improvement, and, in turn, opportunities to apply
critical bandwidth methods to the large class of potential applications.

\pagebreak{}

\printbibliography[heading=bibintoc]

\pagebreak{}

\appendix

\section{Proofs\label{sec:Proofs}}

\subsection{Preliminaries}

The following definitions are used to simplify the proofs below. $L_{f'}$
and $U_{f'}$ are as defined in assumption~\eqref{assu:supnorm},
and $\hat{G}^{NF}$ is as constructed in Algorithm~\ref{alg:frcb}.
Define $R^{CB}$ to equal 1 if the \oldCB{} test rejects, and 0 otherwise.
Similarly, define $R^{FRCB}$ to equal 1 if the FRCB test rejects,
and 0 otherwise. Let $G^{F}$ and $G^{NF}$ partition $G$ such that
$G^{F}=\left\{ x\in G\mid f'(x)=0\right\} $, $G^{NF}=\left\{ x\in G\mid f'(x)\ne0\right\} $.
Construct $\alpha_{n}^{flat}$ to be any sequence that satisfies both
$\alpha_{n}^{flat}\rightarrow0$ and assumption~\eqref{assu:supnorm};
that is, if we define $d_{n}(\alpha_{n}^{flat})=\left\Vert U_{f'}-L_{f'}\right\Vert _{\infty}$
to be the sequence of stochastic sup-norm lengths of the confidence
band, we require that both $\alpha_{n}^{flat}\rightarrow0$ and $d_{n}(\alpha_{n}^{flat})\overset{p}{\rightarrow}0$.\footnote{Intuitively, we require $\alpha_{n}^{flat}$ to converge to 0, but
at a rate that is sufficiently slow that enough precision is still
gained. The rate will depend on the properties of the particular CB
test and simultaneous confidence band.

}

\subsection{\label{sec:proof_undernull}Proof of Theorem~\ref{thm:undernull}}
\begin{thm*}
Under assumptions \eqref{assu:first}–\eqref{assu:last}, FRCB has
asymptotic level $\alpha$.
\end{thm*}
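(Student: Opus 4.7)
The plan is to prove $\P(\hat{G}^{NF} = G^{NF}) \to 1$, at which point the FRCB test reduces, up to an $o(1)$ probability, to a standard CB test on the fixed grid $G^{NF}$; by construction every grid point of $G^{NF}$ has nonzero derivative, so Assumption~\eqref{assu:first} will deliver asymptotic level $\alpha$. To establish the set identity, I would first show the inclusion $\hat{G}^{NF} \subseteq G^{NF}$ on the coverage event $A_n \coloneqq \{L_{f'} \le f' \le U_{f'}\}$. By Assumption~\eqref{assu:supnorm} with $a = 1 - \alpha_n^{flat}$, together with $\alpha_n^{flat} \to 0$ from the preliminaries, $\P(A_n) \to 1$. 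On $A_n$, any $x \in G^{F}$ satisfies $L_{f'}(x) \le f'(x) = 0 \le U_{f'}(x)$, so $0 \in [L_{f'}(x), U_{f'}(x)]$ and hence $x \notin \hat{G}^{NF}$, giving the inclusion.

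For the reverse inclusion, since $f'$ is continuous and $G^{NF}$ is finite, $\delta \coloneqq \min_{x \in G^{NF}} |f'(x)| > 0$. Let $B_n = \{\|U_{f'} - L_{f'}\|_\infty < \delta\}$; by Assumption~\eqref{assu:supnorm} and the preliminaries, $\P(B_n) \to 1$. On $A_n \cap B_n$, for $x \in G^{NF}$ with $f'(x) > 0$, we have $L_{f'}(x) \ge U_{f'}(x) - \delta \ge f'(x) - \delta \ge 0$, and the inequality is strict because $f'(x) \ge \delta$; hence $L_{f'}(x) > 0$ and $x \in \hat{G}^{NF}$. The case $f'(x) < 0$ is symmetric, so $\hat{G}^{NF} = G^{NF}$ on $A_n \cap B_n$, a set whose probability tends to one.

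To conclude, write $\P(R^{FRCB}=1) = \P(R^{FRCB}=1,\, \hat{G}^{NF} = G^{NF}) + o(1)$, and observe that on the grid-equality event $R^{FRCB}$ coincides with the rejection indicator of a standard CB test run on the fixed grid $G^{NF}$. Since every point of $G^{NF}$ has $f'(x) \neq 0$, Assumption~\eqref{assu:first} applies to this standard CB test under the null and yields $\P(R^{FRCB}=1) = \alpha + o(1)$.

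The main obstacle is bridging the interval-based definition of flatness exclusion used in Assumption~\eqref{assu:first} with the pointwise (grid-based) non-flatness produced by $G^{NF}$: excluding grid points with $f'(x) = 0$ does not by itself rule out flat sub-intervals of $f$ lying between retained grid points. The argument should emphasize that the CB test statistic and its bootstrap both depend on $f$ only through $\hat{f}(D,h,G^{NF})$, so pointwise non-flatness on the grid—which our construction guarantees—is precisely what is needed for the consistency-under-flatness-exclusion assumption to apply.
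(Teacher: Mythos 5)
Your proof is correct and follows the same basic strategy as the paper's: condition on a high-probability event on which the selected grid contains no truly flat points, invoke Assumption 1 on that event, and show the complementary event has vanishing probability. The one structural difference is that you establish the full identity $\P(\hat{G}^{NF}=G^{NF})\to 1$ (both inclusions), whereas the paper's level proof conditions only on the weaker one-sided event $\{\hat{G}^{NF}\cap G^{F}=\emptyset\}$, whose complement it bounds directly by the miscoverage probability $\alpha_{n}^{flat}+o(1)\to 0$. Your two-sided identity, which additionally uses the shrinking band width and the positive minimum $\delta$ of $|f'|$ over $G^{NF}$, is exactly the decomposition the paper reserves for the power theorem (Theorem 2) and for Theorem 3; it is more than is needed for level, though harmless, and your argument for each inclusion is sound. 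Your closing remark about the mismatch between the interval-based definition of flatness exclusion in Assumption 1 and the pointwise non-flatness guaranteed on the retained grid identifies a genuine subtlety that the paper's own proof also elides: the paper simply asserts $\P[R^{CB}=1\mid\hat{G}^{NF}\cap G^{F}=\emptyset]\le\alpha$ without arguing that the CB test restricted to $\hat{G}^{NF}$ falls under Assumption 1 even though $f$ itself retains flat intervals between retained grid points. Your observation that the statistic and its bootstrap depend on $f$ only through the estimates on the grid is the right way to bridge this, and making it explicit is, if anything, an improvement on the paper's treatment.
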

\begin{proof}
Under the null hypothesis,
\begin{align*}
\P\left[R^{FRCB}=1\right]=\,\,\, & \P\left[R^{CB}=1\mid\hat{G}^{NF}\cap G^{F}=\emptyset\right]\P\left[\hat{G}^{NF}\cap G^{F}=\emptyset\right]+\\
 & +\P\left[R^{CB}=1\mid\hat{G}^{NF}\cap G^{F}\ne\emptyset\right]\P\left[\hat{G}^{NF}\cap G^{F}\ne\emptyset\right]\\
\le\,\,\, & \alpha\left(1-\alpha_{n}^{flat}\right)+\P\left[R^{CB}=1\mid\hat{G}^{NF}\cap G^{F}\ne\emptyset\right]\alpha_{n}^{flat}\\
=\,\,\, & \alpha+\alpha_{n}^{flat}\left\{ -1+\P\left[R^{CB}=1\mid\hat{G}^{NF}\cap G^{F}\ne\emptyset\right]\right\} \\
\rightarrow\,\,\, & \alpha,
\end{align*}
where the convergence is a result of $\alpha_{n}^{flat}\rightarrow0$.
\end{proof}

\subsection{\label{sec:proof_underalt}Proof of Theorem~\ref{thm:underalt}}
\begin{thm*}
Under assumptions \eqref{assu:first}–\eqref{assu:last}, FRCB has
asymptotic power 1.
\end{thm*}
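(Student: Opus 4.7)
The plan is to mirror the structure of the proof of Theorem~\ref{thm:undernull}, but in reverse: show that with probability tending to 1 the filtered grid $\hat{G}^{NF}$ simultaneously (i) contains no flat points of $f$ and (ii) retains a subregion on which $f$ visibly violates the null, so that \oldCB{} applied to $\hat{G}^{NF}$ is effectively a \oldCB{} test on a flatness-excluded alternative and Assumption~\eqref{assu:CB-consistent-no-flatness} closes the argument.

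First, I would extract a ``witness'' interval for the violation. Since $f \in \Theta^{A}$ and $f$ is continuously differentiable, the null violation must be carried by a region where $f'$ is bounded away from $0$. For monotonicity, non-monotonicity of $f$ yields an open interval on which $f'<0$; by continuity, we can choose a closed subinterval $[a,b]$ and a constant $\delta>0$ with $|f'(x)|>2\delta$ for all $x\in[a,b]$, and such that the restriction of $f$ to any grid containing $[a,b]\cap G$ still lies in $\Theta^{A}$. Analogous witnesses exist for the other null classes to which CB applies (peak counts, quasi-convexity/concavity), because in each case a violation of the null is supported on a portion of $\mathcal{X}$ where $f'$ is of definite sign. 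Next, I would define the good event
\[
A_n = \{L_{f'}\le f'\le U_{f'}\}\cap\{\|U_{f'}-L_{f'}\|_\infty<\delta\}.
\]
By Assumption~\eqref{assu:supnorm} applied at coverage level $1-\alpha_n^{flat}$, together with the Preliminaries requirement that $\alpha_n^{flat}\to 0$ and $d_n(\alpha_n^{flat})\convergesInProbability 0$, $\P(A_n)\to 1$. On $A_n$: for every $x\in G^{F}$, coverage forces $0=f'(x)\in[L_{f'}(x),U_{f'}(x)]$, so $x\notin\hat{G}^{NF}$, i.e., $\hat{G}^{NF}\cap G^{F}=\emptyset$; for every $x\in[a,b]\cap G$, the lower bound $|f'(x)|>2\delta$, the band width bound, and coverage together imply $0\notin[L_{f'}(x),U_{f'}(x)]$, so $[a,b]\cap G\subset\hat{G}^{NF}$.

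The main step is then to invoke Assumption~\eqref{assu:CB-consistent-no-flatness}. Conditional on $A_n$, the grid $\hat{G}^{NF}$ is a (random) subset of $G$ on which the restriction of $f$ exhibits flatness exclusion and still belongs to $\Theta^{A}$ by the witness construction. Hence FRCB on $\hat{G}^{NF}$ is structurally a \oldCB{} test in a flatness-excluded alternative problem, and Assumption~\eqref{assu:CB-consistent-no-flatness} gives $\P(R^{CB}=1\mid\hat{G}^{NF},A_n)\to 1$. Combining,
\[
\P[R^{FRCB}=1]\ge \P[R^{CB}=1\mid A_n]\P[A_n]\to 1,
\]
as required. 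The hard part will be the passage in the last paragraph: Assumption~\eqref{assu:CB-consistent-no-flatness} is stated for a deterministic $f$ and a fixed grid, whereas here $\hat{G}^{NF}$ is random and ``flatness exclusion'' is only a property of $f$ restricted to this random sub-grid. The resolution I envision is to condition on $\hat{G}^{NF}$ within $A_n$, so that the two deterministic properties of $\hat{G}^{NF}$ (disjointness from $G^{F}$ and containment of the witness sub-grid) are enough to put the conditional CB test into the scope of Assumption~\eqref{assu:CB-consistent-no-flatness}. A secondary delicacy is making the witness construction generic across all CB-compatible null classes, which follows from the general fact that under the alternative any violation is carried by a non-flat portion of $f$.
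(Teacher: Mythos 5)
Your proposal is correct and follows essentially the same route as the paper: both arguments condition on a high-probability event under which the filtered grid $\hat{G}^{NF}$ excludes the flat points of $G$ and retains the part of the grid carrying the violation, and then invoke assumption~\eqref{assu:CB-consistent-no-flatness} together with the law of total probability. The only material difference is that the paper's good event is the exact equality $\left\{ \hat{G}^{NF}=G^{NF}\right\}$, so the grid is deterministic when assumption~\eqref{assu:CB-consistent-no-flatness} is applied, whereas you settle for disjointness from $G^{F}$ plus containment of a witness sub-grid and must therefore apply the assumption over a still-random grid --- a delicacy you correctly flag and which can be closed by a union bound over the finitely many admissible subsets of $G$.
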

\begin{proof}
Define $A$ to be the event $\left\{ G^{NF}=\hat{G}^{NF}\right\} =\left\{ \hat{G}^{NF}\cap G^{F}=\emptyset,\,G^{NF}\subseteq\hat{G}^{NF}\right\} $.
Then,
\begin{align*}
\P\left[A^{C}\right]\le\,\,\, & \P\left[\hat{G}^{NF}\cap G^{F}\ne\emptyset\right]+\P\left[G^{NF}\not\subseteq\hat{G}^{NF}\right]\\
=\,\,\, & \left(\alpha_{n}^{flat}+o(1)\right)+o(1)\\
\rightarrow\,\,\, & 0,
\end{align*}
where the $o(1)$ terms come from assumption~\eqref{assu:supnorm}.
Then, under the alternative,
\begin{align*}
\P\left[R^{FRCB}=1\right]=\,\,\, & \P\left[R^{CB}=1\mid A\right]\P\left[A\right]+\P\left[R^{CB}=1\mid A^{C}\right]\P\left[A^{C}\right]\\
\rightarrow\,\,\, & 1,
\end{align*}
because $\P\left[R^{CB}=1\mid A\right]\rightarrow1$ by assumption~\eqref{assu:CB-consistent-no-flatness},
and $\P\left[A^{C}\right]\rightarrow0$ as shown above.
\end{proof}

\subsection{\label{subsec:proof_same-dist}Proof of Theorem~\ref{thm:same-dist}}
\begin{thm*}
Under flatness exclusion, $\hstatFRCB$ has the same asymptotic distribution
as $\hstatCB$.
\end{thm*}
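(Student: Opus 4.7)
The plan is to show that, under flatness exclusion, the event $\{\hstatFRCB = \hstatCB\}$ has probability tending to one, so the two statistics share the same asymptotic distribution by a standard ``asymptotic equivalence'' argument (Slutsky-type: if $X_n = Y_n$ with probability $\to 1$ and $Y_n \convergesInDistribution Y$, then $X_n \convergesInDistribution Y$). Since $\hstatFRCB = h_{stat}(D, \hat{G}^{NF})$ and $\hstatCB = h_{stat}(D, G)$, it suffices to prove $\P[\hat{G}^{NF} = G] \to 1$.

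First, I would use flatness exclusion together with the assumption that $f$ is continuously differentiable to obtain a uniform lower bound $m := \inf_{x \in G}|f'(x)| > 0$. On the finite grid $G$, this minimum is attained and is positive as long as none of the grid points lies at an isolated zero of $f'$; under flatness exclusion the only zeros of $f'$ are isolated, so generically $m > 0$, and for the argument we may assume this (or equivalently pre-process the grid so that isolated zeros of $f'$ are avoided, since these form a measure-zero set).

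Next, I would invoke assumption~\eqref{assu:supnorm} twice. Define the good event
\[
E_n := \left\{ L_{f'} \le f' \le U_{f'} \right\} \cap \left\{ \|U_{f'} - L_{f'}\|_\infty < m \right\}.
\]
By the coverage statement $\P[L_{f'}\le f'\le U_{f'}] = (1-\alpha_n^{flat}) + o(1) \to 1$ (since $\alpha_n^{flat} \to 0$ by the preliminaries), and by $\|U_{f'} - L_{f'}\|_\infty = o_p(1)$, both events have probability tending to one, so $\P[E_n]\to 1$. On $E_n$, for every $x \in G$, the interval $[L_{f'}(x), U_{f'}(x)]$ contains $f'(x)$, has length strictly less than $|f'(x)|$, and therefore cannot contain $0$. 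Hence on $E_n$, $\hat{G}^{NF} = G$.

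The conclusion then follows immediately: $\P[\hstatFRCB = \hstatCB] \ge \P[E_n] \to 1$, so for any bounded continuous $\phi$ we have $\E[\phi(\hstatFRCB)] - \E[\phi(\hstatCB)] \to 0$, and the two statistics share the same limiting law. The main technical obstacle, as noted above, is handling isolated zeros of $f'$ on the grid (the definition of flatness exclusion rules out flat \emph{intervals} but not isolated zeros); I would dispense with this either by strengthening ``flatness exclusion'' to $\inf_{x\in G}|f'(x)|>0$ (which is the substantively relevant case, since CB is grid-based) or by noting that isolated zeros of $f'$ at grid points have no effect on the CB statistic's limiting behavior because they correspond to a vanishing fraction of the constraints enforced in the CB minimization.
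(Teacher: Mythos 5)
Your proposal is correct and takes essentially the same route as the paper: both arguments reduce the claim to showing $\P\bigl[\hat{G}^{NF}=G\bigr]\rightarrow1$ by combining the coverage and shrinking-width parts of assumption~\eqref{assu:supnorm} with the positive lower bound $\min_{g\in G}\left|f'(g)\right|>0$ on the grid, and then transfer the limiting law by an asymptotic-equivalence (Slutsky-type) argument. The only notable difference is that you explicitly flag and patch the possibility of isolated zeros of $f'$ landing on grid points --- a case flatness exclusion does not literally rule out --- whereas the paper's proof passes over this when it asserts $a=\min_{g\in G}\left|f'(g)\right|>0$.
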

\global\long\def\hlim{h_{\infty}}%

\begin{proof}
Suppose that $a_{n}\left(\hstatCB-\hlim\right)\convergesInDistribution W$
for some sequence $a_{n}$, constant $\hlim$, and distribution r.v.\ $W$.
In most applications of \oldCB{}, $\hlim$ is 0 under the null hypothesis.

Define 
\begin{align*}
\mathcal{C}(x) & \coloneqq\left\{ y\in\mathbb{R}\mid L_{f'}(x)\le y\le U_{f'}(x)\right\} ,\\
I & \coloneqq\bigcup_{g\in G}\mathcal{C}(g).
\end{align*}
Consider that under flatness exclusion,
\[
\P\left[\hat{G}^{NF}=G\right]=\P\left[0\notin I\right].
\]
Define $B$ to be the event $\left\{ f'\left(g\right)\in\mathcal{C}(g)\text{ for all }g\in G\right\} $.
$\P\left[B\right]=1-\left(\alpha_{n}^{flat}+o(1)\right)$ by step~\eqref{enu:calc_cbands}
of Algorithm~\ref{alg:frcb} and assumption~\eqref{assu:supnorm}.
Define $a=\min_{g\in G}\left|f'(g)\right|$. Under flatness exclusion,
$a>0$. Define $Q(x)=\left|U_{f'}(x)-f'(x)\right|\vee\left|L_{f'}(x)-f'(x)\right|$.
Then,

\[
\P\left[0\in I|B\right]\le\P\left[\max_{g\in G}Q(g)\ge a\right],
\]
which converges to 0 by assumption~\eqref{assu:supnorm}. It follows
that 
\begin{align*}
\P\left[0\in I\right] & =\P\left[0\in I|B\right]\P\left[B\right]+\P\left[0\in I|B^{C}\right]\left(1-\P\left[B\right]\right)\\
 & =\P\left[0\in I|B\right]\left[1-\left(\alpha_{n}^{flat}+o(1)\right)\right]+\P\left[0\in I|B^{C}\right]\left(\alpha_{n}^{flat}+o(1)\right),
\end{align*}
 where $\P\left[0\in I|B\right]\rightarrow0$ by above, and $\alpha_{n}^{flat}\rightarrow0$
by construction. Then, $\P\left[0\in I\right]\rightarrow0$, so $\P\left(\hat{G}^{NF}=G\right)\rightarrow1$,
and thus
\[
\P\left[\hstatFRCB=\hstatCB\right]=\P\left[\hstatCB\left(D,\hat{G}^{NF}\right)=\hstatCB\left(D,G\right)\right]\rightarrow1,
\]
i.e., $\hstatFRCB\convergesInProbability\hlim$. It also follows
that
\begin{align*}
 & \P\left[a_{n}\left(\hstatFRCB-\hlim\right)=a_{n}\left(\hstatCB-\hlim\right)\right]\\
 & =\P\left[a_{n}\left\{ \hstatCB\left(D,\hat{G}^{NF}\right)-\hlim\right\} =a_{n}\left\{ \hstatCB\left(D,G\right)-\hlim\right\} \right]\rightarrow1,
\end{align*}
and thus $a_{n}\left(\hstatFRCB-\hlim\right)\convergesInDistribution W$
by, e.g., Theorem\ 2.7 of \citet{vandervaart1998_book}. 
\end{proof}

\section{\label{sec:sim_functions}Simulation Functions}

This section provides the analytical definitions and origins of the
functions graphed in Figure~\ref{reg_fns_graph}.  The definitions
are as follows:
\begin{align*}
m_{1}\left(x\right) & =0\\
m_{3}\left(x\right) & =x+0.415e^{-50\left(x-0.5\right)^{2}}\\
m_{4}\left(x\right) & =\begin{cases}
10\left(x-0.5\right)^{3}-e^{-100\left(x-0.25\right)^{2}} & \mbox{if }x<0.5,\\
0.1\left(x-0.5\right)-e^{-100\left(x-0.25\right)^{2}} & \mbox{otherwise}
\end{cases}\\
flat_{1}\left(x\right) & =\begin{cases}
0.5x(1-x)\cdot9 & \mbox{if }x<0.5,\\
0.5^{3}\cdot9 & \mbox{otherwise}
\end{cases}
\end{align*}
The $m_{n}$ functions were explored by \citet{gsv_mono}, and some
were also studied in other papers that tested for monotonicity of
regression functions: At least one of $m_{3}$ and $m_{4}$ was
used by \citet{bowman_monotonicity}, \citet{hallHeckman_mono_dip},
and \citet{kostyshak_2017}. Functions similar to $flat_{1}$ were
explored by \citet{simonsohn2017}.
\end{document}